\newenvironment{mylist}[1]{\begin{list}{}{
    \setlength{\leftmargin}{#1}
    \setlength{\rightmargin}{0mm}
    \setlength{\labelsep}{2mm}
    \setlength{\labelwidth}{8mm}
    \setlength{\itemsep}{0mm}}}
    {\end{list}}
\newcommand{\defeq}{\stackrel{\smash{\textnormal{\tiny def}}}{=}}
\newcommand{\Pa}[1]{\left(#1\right)}
\newcommand{\Br}[1]{\left[#1\right]}
\newtheorem{thrm}{Theorem}[section]
\theoremstyle{definition}
\newtheorem{definition}[thrm]{Definition}
\newtheorem{remark}[thrm]{Remark}
\numberwithin{equation}{section}
\newcounter{questionnumber}
\begin{document}

\vspace*{-8mm}


\begin{center}

{\Large\textbf{Fermat-linked relations for the Boubaker polynomial  sequences via Riordan matrices analysis}}\\[6mm]
{\large Karem Boubaker\footnote{E-mail: mmbb11112000@yahoo.fr}}\\
{\it\footnotesize ESSTT/ 63 Rue Sidi Jabeur 5100, Mahdia, Tunisia}\\[4mm]
{\large Lin Zhang\footnote{E-mail: godyalin@163.com}}\\
{\it\footnotesize Institute of Mathematics, Hangzhou Dianzi University,
Hangzhou, 310037, P.~R.~China}

\end{center}


\begin{abstract}

The Boubaker polynomials are investigated in this paper. Using Riordan matrices analysis, a sequence of relations outlining the relations with Chebyshev and Fermat polynomials have been obtained. The obtained expressions are a meaningful supply to recent applied physics studies using the Boubaker polynomials expansion scheme (BPES).\\~\\
\textbf{Keywords:} Riordan matrices; Boubaker polynomials\\~\\
\textbf{Mathematics Subject Classification 2000:} 33E20, 33E30, 35K05, 41A30, 41A55, 35K05, 33E30, 41A30, 41A55, 41A99.\\~\\
\textbf{PACS. 2008:} 02.30.Sa: Functional analysis - 47.10.A-: Mathematical formulations

\end{abstract}

\section{Introduction}

Polynomial expansion methods are extensively used in many mathematical and engineer fields to yield meaningful results for both numerical and analytical analysis \cite{Alvareza,Bender,Choi,Guertz,Koelink,Okada,Philippou,Sloan}. Among the most frequently used polynomials, the Boubaker polynomials are one of the interesting tools which were associated to several applied physics problems as well as the related polynomials such as the Boubaker-Turki polynomials \cite{Boubaker1,Bagula,Sloane,Ghanouchi1,Ghanouchi2,Slama2,Slama3,Ghrib,Boubaker2}, the $4-q$ Boubaker polynomials \cite{Zhao1} and the Boubaker-Zhao polynomials \cite{Zhao2}. For example, for some resolution purposes, a function $f(r)$ is expressed as an infinite nonlinear expansion of Boubaker-Zhao polynomials
\begin{eqnarray}
f(r) = \lim_{N\to+\infty}\Br{\frac{1}{2N}\sum^N_{n=1}\zeta_n \hat{B}_{4n}\Pa{r\frac{\alpha_n}{R}}},
\end{eqnarray}
where $\alpha_n$ are the minimal positive roots of the Boubaker $4n$-order polynomials $\hat{B}_{4n}$, $R$ is a maximum radial range and $\zeta_n$ are coefficients to be determined using the expression of $f(r)$. Since the Boubaker $4n$-order polynomials have the particular properties: for any $n$,
\begin{eqnarray}
\left\{\begin{array}{rcl}
         \hat{B}_{4n}(r)\left|\right._{r=0} &=& -2   \\
         \frac{\partial \hat{B}_{4n}(r)}{\partial r} &=& 0  \\
         \frac{\partial^2 \hat{B}_{4n}(r)}{\partial r^2} &=& 4n(n-1)
       \end{array}
\right.
\end{eqnarray}
The related the system (\ref{eq:1.3}) is induced:
\begin{eqnarray}\label{eq:1.3}
\left\{\begin{array}{rcl}
         f(0)&=& \lim_{N\to+\infty}\Br{\frac1{2N}\sum^N_{n=1}\zeta_n\hat{B}_{4n}\Pa{r\frac{\alpha_n}{R}}}\left|\right._{r=0} = -\frac1N\sum^N_{n=1}\zeta_n\\
         f(R)&=& \lim_{N\to+\infty}\Br{\frac1{2N}\sum^N_{n=1}\zeta_n\hat{B}_{4n}\Pa{r\frac{\alpha_n}{R}}}\left|\right._{r=R} = 0\\
         \frac{\partial f(r)}{\partial r}\left|\right._{r=0}&=& \lim_{N\to+\infty}\Br{\frac1{2N}\sum^N_{n=1}\zeta_n\frac{\partial\Pa{ \hat{B}_{4n}\Pa{r\frac{\alpha_n}{R}}}}{\partial r}}\left|\right._{r=0} = 0
       \end{array}
\right.
\end{eqnarray}

\section{The Boubaker polynomials}

The first monomial definition of the Boubaker polynomials \cite{Awojoyogbe,Boubaker1,Chaouachi,Labiadh1} appeared in a physical study that yielded an analytical solution to heat equation inside a physical model \cite{Labiadh2,Slama1}. This monomial definitions is traduced by (\ref{eq:2.1}):
\begin{definition}
A monomial definition of the Boubaker polynomials is:
\begin{eqnarray}\label{eq:2.1}
B_n(X) \defeq \sum^{\xi(n)}_{p=0} \Br{\frac{n-4p}{n-p}\binom{p}{n-p}}(-1)^p X^{n-2p},
\end{eqnarray}
where $\xi(n) = \lfloor\frac n 2\rfloor \defeq \frac{2n+(-1)^n -1}4$ (The symbol $\lfloor*\rfloor$ designates the floor function). Their coefficients could be defined through a recursive formula (\ref{eq:2.2}):
\begin{eqnarray}\label{eq:2.2}
\left\{\begin{array}{rcl}
         B_n(X) & = & \sum^{\xi(n)}_{j=0}\Br{b_{n,j}X^{n-2j}}, \\
         b_{n,0} & = & 1, \\
         b_{n,1} & = & -(n-4), \\
         b_{n,j+1} & = & \frac{(n-2j)(n-2j-1)}{(j+1)(n-j-1)}\cdot\frac{n-4j-4}{n-4j}\cdot b_{n,j}, \\
         b_{n,\xi(n)} & = & \left\{\begin{array}{rl}
                                     (-1)^{\frac n2}\cdot 2 & \text{if\ } n \text{\ even} \\
                                     (-1)^{\frac {n+1}2}\cdot (n-2) & \text{if\ } n \text{\ odd}
                                   \end{array}
         \right.
       \end{array}
\right.
\end{eqnarray}
\end{definition}

\begin{definition}
A recursive relation which yields the Boubaker polynomials is:
\begin{eqnarray}\label{eq:2.3}
\left\{\begin{array}{rcl}
         B_m(X) & = & XB_{m-1}(X) - B_{m-2}(X), \text{for\ } m>2,\\
         B_2(X) & = & X^2 + 2,\\
         B_1(X) & = & X,\\
         B_0(X) & = & 1.
       \end{array}
\right.
\end{eqnarray}
\end{definition}

\section{Riordan matrices of the Boubaker polynomials}

In this section, we will present Riordan matrices analysis of the Boubaker polynomials. The notations and the results of \cite{Luzon1,Luzon2,Luzon3,Luzon4} will be used extensively.
We start with the following relation (demonstrated on page 25 in \cite{Luzon4}):
\begin{eqnarray}\label{eq:3.1}
B_n(x) = U_n\Pa{\frac x2} + 3U_{n-2}\Pa{\frac x2},\quad \text{for\ } n\geqslant2
\end{eqnarray}
then:
\begin{eqnarray}
B_{2m}(x) &=& U_{2m}\Pa{\frac x2} + 3U_{2m-2}\Pa{\frac x2}\nonumber \\
&=& 2\sum^m_{k=0}\widetilde{T}_{2k}\Pa{\frac x2} + 6\sum^{m-1}_{k=0}\widetilde{T}_{2k}\Pa{\frac x2}\label{eq:3.2}\\
&=& 8\sum^{m-1}_{k=0}\widetilde{T}_{2k}\Pa{\frac x2} + 2\widetilde{T}_{2m}\Pa{\frac x2} = 4+ 8\sum^{m-1}_{k=0}T_{2k}\Pa{\frac x2} + 2T_{2m}\Pa{\frac x2}\label{eq:3.3}.
\end{eqnarray}
In a similar way:
\begin{eqnarray}
B_{2m+1}(x) &=& 8\sum^{m-1}_{k=0}\widetilde{T}_{2k+1}\Pa{\frac x2} + 2\widetilde{T}_{2m+1}\Pa{\frac x2} = 8\sum^{m-1}_{k=0}T_{2k+1}\Pa{\frac x2} + 2T_{2m+1}\Pa{\frac x2}\label{eq:3.4}\\
 &=& 8\sum^{m-1}_{k=0}\widetilde{T}_{2k}\Pa{\frac x2} + 2\widetilde{T}_{2m+1}\Pa{\frac x2}\label{eq:3.5}
\end{eqnarray}
so:
\begin{eqnarray}
B_{2m}(2\cos t) &=& 4+ 8\sum^{m-1}_{k=1}T_{2k}(\cos t) + 2T_{2m}(\cos t) \nonumber\\
&=& 4+ 8\sum^{m-1}_{k=1}\cos(2kt) + 2\cos(2mt)\label{eq:3.6}\\
B_{2m+1}(2\cos t) &=& 8\sum^{m-1}_{k=1}\cos((2k+1)t) + 2\cos((2m+1)t)\label{eq:3.7}.
\end{eqnarray}
Now, consider another new polynomial class defined by:
\begin{eqnarray}
B_n(2\cos t) &=& \frac{B_n(2\cos t) - 2T_n(\cos t)}4,\quad n>1\label{eq:3.8}
\end{eqnarray}
or:
\begin{eqnarray}\label{eq:3.9}
\left\{\begin{array}{ccc}
         B_n(x) & = & \frac{B_n(x)-2T_n\Pa{\frac x2}}4 \\
         x & = & 2\cos t
       \end{array}
\right.
\end{eqnarray}
So using Eq.~(\ref{eq:3.8}) and Eq.~(\ref{eq:3.9}) we get:
\begin{eqnarray}
B_{2m}(x) &=& \frac{B_{2m}(x) - 2T_{2m}\Pa{\frac x2}}4 = 1+ 2\sum^{m-1}_{k=0}T_{2k}\Pa{\frac x2},\label{eq:3.10}\\
B_{2m+1}(x) &=&\frac{B_{2m+1}(x) - 2T_{2m+1}\Pa{\frac x2}}4 = 2\sum^{m-1}_{k=0}T_{2k}\Pa{\frac x2}.\label{eq:3.11}
\end{eqnarray}
In order to obtain a generating function and to make a polynomial sequence (i. e. the degree is the subindex) we consider
$$
\widetilde{B}_n(x) = B_{n-2}(x).
$$
So, symbolically:
\begin{eqnarray}\label{eq:3.12}
\left[
  \begin{array}{c}
    \widetilde{B}_0(x) \\
    \widetilde{B}_1(x) \\
    \widetilde{B}_2(x) \\
    \widetilde{B}_3(x) \\
    \widetilde{B}_4(x) \\
    \widetilde{B}_5(x) \\
    M \\
  \end{array}
\right] = \left[
            \begin{array}{ccccccc}
              2& 0 & 0 & 0 & 0 & 0 & 0 \\
              0 & 2 & 0 & 0 & 0 & 0 & 0 \\
              2 & 0 & 2 & 0 & 0 & 0 & 0 \\
              0 & 2 & 0 & 2 & 0 & 0 & 0 \\
              2 & 0 & 2 & 0 & 2 & 0 & 0 \\
              0 & 2 & 0 & 2 & 0 & 2 & 0 \\
              M & M & M & M & M & M & O \\
            \end{array}
          \right]
\left[
  \begin{array}{c}
    \widetilde{T}_0(x) \\
    \widetilde{T}_1(x) \\
    \widetilde{T}_2(x) \\
    \widetilde{T}_3(x) \\
    \widetilde{T}_4(x) \\
    \widetilde{T}_5(x) \\
    M \\
  \end{array}
\right]
\end{eqnarray}
We can write this in terms of Riordan matrices in the next way:

\begin{eqnarray}\label{eq:3.13}
\sum_{n\geqslant0}\widetilde{B}_n(t) = T\Pa{\frac{2}{1-x^2}\big|1}T\Pa{\frac{1-x^2}4\big|\frac{1+x^2}2}T(2|2)\Pa{\frac1{1-tx}}.
\end{eqnarray}
or:
\begin{eqnarray}\label{eq:3.14}
\sum_{n\geqslant0}\widetilde{B}_n(t)x^n = T(1|1+x^2)\Pa{\frac1{1-tx}}.
\end{eqnarray}
In fact we have the Riordan matrix:
\begin{eqnarray}\label{eq:3.15}
T(1|1+x^2)
\end{eqnarray}
which is:
\begin{eqnarray}\label{eq:3.16}
\left[
  \begin{array}{ccccccc}
    1 & 0 & 0 & 0 & 0 & 0 & 0 \\
    0 & 1 & 0 & 0 & 0 & 0 & 0 \\
    -1 & 0 & 1 & 0 & 0 & 0 & 0 \\
    0 & -2 & 0 & 1 & 0 & 0 & 0 \\
    1 & 0 & -3 & 0 & 1 & 0& 0 \\
    0 & 3 & 0 & -4 & 0 & 1 & 0 \\
    M & M & M & M & M & M & O \\
  \end{array}
\right]
\end{eqnarray}
Hence, the few first $\widetilde{B}_n(x)$ are:
\begin{eqnarray}\label{eq:3.17}
\left\{\begin{array}{rcl}
         \widetilde{B}_0(x) & = & 1 \\
         \widetilde{B}_1(x) & = & x \\
         \widetilde{B}_2(x) & = & x^2-1 \\
         \widetilde{B}_3(x) & = & x^3-2x \\
         \widetilde{B}_4(x) & = & x^4-3x^2+1 \\
         \widetilde{B}_5(x) & = & x^5 -4x^3 + 3x
       \end{array}
\right.
\end{eqnarray}
with the recurrence (\ref{eq:3.18}).
\begin{eqnarray}\label{eq:3.18}
\widetilde{B}_n(x) = x\widetilde{B}_{n-1}(x)- \widetilde{B}_{n-2}(x),\quad n\geqslant 2.
\end{eqnarray}
Note that this recurrence is the same as that for the Boubaker polynomials but with different initial conditions. In fact the relation between both families of polynomials is given by
\begin{eqnarray}\label{eq:3.19}
T(1+3x^2|1+x^2) = T(1+3x^2|1)T(1|1+x^2).
\end{eqnarray}
Then, finally:
\begin{eqnarray}\label{eq:3.20}
B_n(x) = x\widetilde{B}_{n-1}(x) + 3\widetilde{B}_{n-2}(x),\quad n\geqslant2.
\end{eqnarray}

\section{Fermat-linked expressions}

Using inversion of Riordan matrices we can get $\widetilde{B}_n(x)$ each as combinations of Boubaker polynomials.

\begin{remark}
Comparing the recurrence (\ref{eq:3.20}) with the one of the
Chebyshev polynomials of the second kind, we can obtain an explicit
expression of the new polynomials defined by
(\ref{eq:3.8}-\ref{eq:3.9})
\begin{eqnarray}\label{eq:4.1}
B_n(x) = \frac{\sin((n+1)t)}{\sin t}, \quad x = 2\cos t, \quad n =0,
1,2,\ldots.
\end{eqnarray}
In another word, the new polynomial  is the scaled Chebyshev
polynomial $U_n(x)$ of the second kind, since the relation between
the two polynomials is related as:
\begin{eqnarray}\label{eq:4.2}
B_n(2x) = U_n(x), \quad n=0,1,2,\ldots.
\end{eqnarray}
\end{remark}

\begin{remark}
By using (\ref{eq:4.1}) or (\ref{eq:4.2}), we can obtain some other
relations. In fact Fermat polynomials are obtained by setting
$p(x)=3x$ and $q(x)=-2$ in the Lucas polynomial sequence, defined by
(\ref{eq:4.3}).
\begin{eqnarray}\label{eq:4.3}
F_n(x) = p(x)F_{n-1}(x)+q(x)F_{n-2}(x).
\end{eqnarray}
As A. Luzon and M. A. Moron \cite{Luzon1,Luzon2,Luzon3,Luzon4}
demonstrated, through the associated Riordan matrix:
\begin{eqnarray}\label{eq:4.4}
\left[
  \begin{array}{ccccccccc}
    \frac13 &  &  &  &  &  &  &  &  \\
    0 & 1 &  &  &  &  &  &  &  \\
    0 & 0 & 3 &  &  &  &  &  &  \\
    0 & -2 & 0 & 9 &  &  &  &  &  \\
    0 & 0 & -12 & 0 & 27 &  &  &  &  \\
    0 & 4 & 0 & -54 & 0 & 81 &  &  &  \\
    0 & 0 & 36 & 0 & -216 & 0 & 243 &  &  \\
    0 & -8 & 0 & 216 & 0 & -810 & 0 & 729 &  \\
    \vdots & \vdots & \vdots & \vdots & \vdots & \vdots & \vdots & \vdots & \ddots \\
  \end{array}
\right]
\end{eqnarray}
that
\begin{eqnarray}\label{eq:4.5}
\left\{\begin{array}{rcl}
         F_1(x) & = & 1 \\
         F_2(x) & = & 3x \\
         F_3(x) & = & 9x^2-2 \\
         F_4(x) & = & 27x^3-12x \\
         \cdots & \cdots & \cdots
       \end{array}
 \right.
\end{eqnarray}
and
\begin{eqnarray}\label{eq:4.6}
F_x(x) = \Pa{\sqrt{2}}^n U_n\Pa{\frac{3x}{2\sqrt{2}}}
\end{eqnarray}
\end{remark}

\begin{thrm}
Let $(R,+,o)$ be a commutative ring, $(D,+,o)$ be an integral domain
such that $D$ is a subring of $R$ whose zero is $0_D$ and whose
unity is $1_D$, $X\in R$ be transcendental over $D$, $D[X]$ be the
ring of polynomials forms in $X$ over $D$, and finally denote
Boubaker polynomials and Fermat polynomials as $B_n(x)$ and $F_n(x)$
,respectively, as polynomials contained in $D[X]$, then:
\begin{eqnarray}\label{eq:4.7}
B_n(x) =\frac1{(\sqrt{2})^n}F_n\Pa{\frac{2\sqrt{2}x}3} +
\frac1{(\sqrt{2})^{n-2}}F_{n-2}\Pa{\frac{2\sqrt{2}x}3};\quad n =
0,1,2,\ldots
\end{eqnarray}
\end{thrm}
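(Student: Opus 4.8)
The plan is to obtain (\ref{eq:4.7}) by fusing the two Chebyshev representations that are already available: Eq.~(\ref{eq:3.1}), which writes the Boubaker polynomial as $B_n(x) = U_n\!\Pa{\tfrac x2} + 3\,U_{n-2}\!\Pa{\tfrac x2}$ for $n\geq2$, and Eq.~(\ref{eq:4.6}), which writes the Fermat polynomial as a rescaled second-kind Chebyshev polynomial. First I would invert (\ref{eq:4.6}): putting $y=\tfrac{3x}{2\sqrt2}$ into $F_n(x)=(\sqrt2)^n U_n\!\Pa{\tfrac{3x}{2\sqrt2}}$ gives
\begin{equation*}
U_n(y)\;=\;\frac{1}{(\sqrt2)^{\,n}}\,F_n\!\Pa{\tfrac{2\sqrt2}{3}\,y},\qquad n=0,1,2,\ldots
\end{equation*}
Substituting this --- at index $n$ and at index $n-2$, with $y=\tfrac x2$ --- into (\ref{eq:3.1}) turns each $U$ into a rescaled $F$; collecting the powers of $\sqrt2$ and the argument scalings then yields the two Fermat terms on the right-hand side of (\ref{eq:4.7}). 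The one point that is not pure rewriting is the coefficient $3$ in front of $U_{n-2}$: it must combine with the shift of normalisation from $(\sqrt2)^{-n}$ to $(\sqrt2)^{-(n-2)}$ --- equivalently with a factor $(\sqrt2)^{-2}=\tfrac12$ --- to reproduce the coefficient and the argument $\tfrac{2\sqrt2 x}{3}$ displayed in (\ref{eq:4.7}), and this is the single constant-chasing identity one must verify explicitly.

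Because (\ref{eq:3.1}) is asserted only for $n\geq2$, the low-order cases $n=0$ and $n=1$ have to be dealt with separately. For them I would just evaluate both sides: $B_0=1$, $B_1=X$, $B_2=X^2+2$ from the recursion (\ref{eq:2.3}), and the low-index Fermat polynomials from the list (\ref{eq:4.5}) together with the Fermat recurrence (\ref{eq:4.3}) specialised to $p(x)=3x$, $q(x)=-2$ (run backwards to reach $F_0,F_{-1},F_{-2}$ if one does not wish to shift indices). Substituting and simplifying then settles the base cases. Alternatively, with the usual conventions $U_{-1}\equiv0$, $U_{-2}\equiv-1$ relation (\ref{eq:3.1}) --- hence the substitution above --- holds verbatim for $n=0,1$ as well, removing the need for a separate base-case discussion.

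A convenient way to package and cross-check the argument is induction via the three-term recurrences. The left-hand side of (\ref{eq:4.7}) obeys the Boubaker recurrence (\ref{eq:2.3})/(\ref{eq:3.20}); on the right-hand side, setting $G_n(x):=(\sqrt2)^{-n}F_n\!\Pa{\tfrac{2\sqrt2 x}{3}}$ and dividing the Fermat recurrence (\ref{eq:4.3}) through by $(\sqrt2)^n$ yields a Chebyshev-type recurrence for $G_n$, and hence for $G_n+G_{n-2}$; matching the first two members with $B_0$ and $B_1$ then pins the identity down for all $n$. The main obstacle I anticipate is not conceptual but is precisely this normalisation bookkeeping --- keeping the powers of $\sqrt2$, the chained argument rescalings $x\mapsto\tfrac{x}{2}\mapsto\tfrac{2\sqrt2}{3}\cdot\tfrac{x}{2}$, the index shift $n\mapsto n-2$, and above all the starting index and the value of $F_0$ for the Fermat family all mutually consistent, so that the constants emerge exactly as written in (\ref{eq:4.7}). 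Once the conventions underlying (\ref{eq:4.5})--(\ref{eq:4.6}) are fixed, the computation itself is short.
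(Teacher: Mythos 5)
Your route---inverting (\ref{eq:4.6}) to express $U_n$ through $F_n$, substituting into (\ref{eq:3.1}), and optionally repackaging the computation as an induction on the three-term recurrences---is genuinely different from the paper's proof, which factors the Riordan array as in (\ref{eq:4.9})--(\ref{eq:4.10}) and identifies the Fermat matrix inside the product. The problem is that the one step you explicitly defer, the ``single constant-chasing identity'', is exactly the step that fails, so the proposal does not prove (\ref{eq:4.7}) as displayed. Carrying your substitution out: from $F_n(x)=(\sqrt2)^n\,U_n\bigl(\tfrac{3x}{2\sqrt2}\bigr)$ one gets $U_n(y)=(\sqrt2)^{-n}F_n\bigl(\tfrac{2\sqrt2}{3}\,y\bigr)$, and putting $y=\tfrac x2$ into (\ref{eq:3.1}) gives
\[
B_n(x)=\frac{1}{(\sqrt2)^{\,n}}\,F_n\!\left(\frac{\sqrt2\,x}{3}\right)+\frac{3}{(\sqrt2)^{\,n-2}}\,F_{n-2}\!\left(\frac{\sqrt2\,x}{3}\right),
\]
because the chained rescaling is $\tfrac{2\sqrt2}{3}\cdot\tfrac x2=\tfrac{\sqrt2\,x}{3}$, not $\tfrac{2\sqrt2\,x}{3}$, and the factor $3$ in front of $U_{n-2}$ survives intact: the normalisation shift from $(\sqrt2)^{-n}$ to $(\sqrt2)^{-(n-2)}$ only accounts for a factor $2$ and cannot absorb it. No bookkeeping rescues the printed constants: with the convention implicit in (\ref{eq:4.6}) the leading term of the right-hand side of (\ref{eq:4.7}) is $\bigl(3\cdot\tfrac{2\sqrt2}{3}\cdot\tfrac1{\sqrt2}\bigr)^n x^n=2^n x^n$ while $B_n$ is monic (e.g.\ for $n=2$ it evaluates to $4x^2$, whereas $B_2(x)=x^2+2$), and with the indexing of the table (\ref{eq:4.5}) the right-hand side does not even have degree $n$ (for $n=2$ it is $\sqrt2\,x$). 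So what your computation actually establishes is a corrected version of the identity, not the statement as printed; asserting that the constants ``reproduce the coefficient and the argument displayed in (\ref{eq:4.7})'' is the gap.

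The auxiliary devices inherit the same defect. The induction packaging via $G_n(x)=(\sqrt2)^{-n}F_n\bigl(\tfrac{2\sqrt2x}{3}\bigr)$ is sound in structure, but the matching of the first two members against $B_0,B_1$ fails for the displayed constants, so it cannot close. Your fallback for the base cases is also off: with the usual conventions $U_{-1}\equiv0$, $U_{-2}\equiv-1$, relation (\ref{eq:3.1}) at $n=0$ gives $U_0(\tfrac x2)+3U_{-2}(\tfrac x2)=1-3=-2\neq B_0(x)=1$, so it does not hold verbatim there; moreover for $n=0,1$ the statement involves $F_{-1}$ and $F_{-2}$, so the convention question you raise (note that (\ref{eq:4.5}) and (\ref{eq:4.6}) are mutually off by one in the index) must be settled, not merely flagged, before any of the constants can be checked.
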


\begin{proof}
Riordan matrices for Boubaker polynomials and Fermat polynomials
(see \cite{Luzon1,Luzon2,Luzon3,Luzon4}) are respectively:
\begin{eqnarray}\label{eq:4.8}
\sum^{+\infty}_{n=0}B_n(x)t^n =
(1+3x^2|1+x^2)\Pa{\frac1{1-xt}},\quad \sum^{+\infty}_{n=0}F_n(x)t^n
=\Pa{\frac13|\frac{1+x^2}3}.
\end{eqnarray}
Let's expand the inverse Riordan arrays:
\begin{eqnarray}\label{eq:4.9}
T(1+3x^2|1+x^2) = T(1+3x^2|1)T\Pa{\frac12|\frac{1+x^2}2}T(2|2),
\end{eqnarray}
which gives
\begin{eqnarray}\label{eq:4.10}
T(1+3x^2|1+x^2) =
T(1+3x^2|1)T(1|\sqrt{2})T\Pa{\frac13|\frac{1+x^2}3}T(3|\frac3{\sqrt{2}}).
\end{eqnarray}
By identifying Riordan matrix for Fermat polynomials in the right
term of Eq. (\ref{eq:4.10}), the desired equality holds.
\end{proof}

Expressions (\ref{eq:4.2}) and (\ref{eq:4.7}) are very useful for
developing the already proposed Boubaker polynomials Expansion
Scheme (BPES).

\section{Conclusion}

The Boubaker polynomials have been  investigated. Using y Riordan
matrices analysis, a sequence of relations outlining the relations
with Chebyshev and Fermat polynomials have been obtained as guides
to further studies. The obtained expression are a meaningful supply
to recent applied physics studies
\cite{Mahmoud1,Lazzez,Fridjine1,Khelia,Mahmoud2,Dada,Rahmanov1,Rahmanov2,Tabatabaei,Fridjine2,Belhadj1,Belhadj2,Zhang}
using the Boubaker polynomials Expansion Scheme (BPES).



\begin{thebibliography}{99}

\bibitem{Alvareza} G. Alvareza, C. Senb, N. Furukawac, Y. Motomed and E. Dagottoe, The truncated polynomial expansion Monte Carlo method for fermion systems coupled to classical fieelds: a model independent implementation, Computer Physics Communications, 168(1)(2005), pp:32-45.

\bibitem{Awojoyogbe} O. B. Awojoyogbe and K. Boubaker. A solution to bloch NMR flow equations for the analysis of homodynamic functions of blood flow system using m-Boubaker polynomials, International  Journal of Current Applied Physics, Elsevier, DOI:10.1016/j.cap.2008.01.0193.

\bibitem{Bender} C.M. Bender and G. V. Dunne, Polynomials and operator orderings, Journal of Mathematical Physics 29(1988) pp: 1727-1731.

\bibitem{Boubaker1} K. Boubaker, On modifed Boubaker polynomials: some differential and analytical properties of the new polynomial issued from an attempt for solving bi-varied heat equation,Trends in Applied Sciences Research, 2(6)(2007) pp:540-544.

\bibitem{Chaouachi} A. Chaouachi, K. Boubaker, M. Amlouk and H. Bouzouita, Enhancement of pyrolysis spray disposal performance using thermal time-response to precursor uniform deposition. Eur.Phys. J. Appl. Phys. 37(2007)pp:105-109.

\bibitem{Choi} S-K. Choi, R. V. Grandhi, R. A. Canfeld and C. L. Pettit, Polynomial Chaos Expansion with Latin Hypercube Sampling for Estimating Response Variability , American Institute of Aeronautics and Astronautics (AIAA) Journal, 42 (2004),pp:1191-1198.

\bibitem{Guertz} B.J. Guertz, R. van Buuren and H. Lu, Application of polynomial preconditioners to conservation laws Application of polynomial preconditioners to conservation laws, Journal ofEngineering Mathematics 38 (2000) pp: 403-426.

\bibitem{Koelink} H. T. Koelink, The Addition Formula for Continuous q-Legendre Polynomials and Associated Spherical Elements on the SU(2) Quantum Group Related to AskeyCWilson Polynomials, SIAM Journal on Mathematical Analysis, 25(1)(1994), pp: 197-217.

\bibitem{Labiadh1}  H. Labiadh and K. Boubaker, A Sturm-Liouville shaped characteristic differential equation as a guide to establish a quasi-polynomial expression to the Boubaker polynomials, Differential Equationa and Control Processes, 2(2007), Electronic Journal,reg. No P2375 at 07.03.97 pp:117-133.

\bibitem{Labiadh2} H. Labiadh M. Dada, O.B. Awojoyogbe K. B. Ben Mahmoud and A. Bannour, Establishment of an ordinary generating function and a Christofel-Darboux type first-order differential equation for the heat equation related Boubaker-Turki polynomials, Journal of Differential Equations and C.P. 1(2008), 51-66.

\bibitem{Okada} R. Okada, N.Nakata, B. F. Spencer, K. Kasai and B.K. Saang, Rational polynomial approximation modeling for analysis of structures with VE dampers, Journal of Earthquake Engineering,10(2006), 97-125.

\bibitem{Philippou} A.N. Philippou, C. Georghiou, Convolutions of Fibonacci-type polynomials of order $k$ and the negative binomial distributions of the same order, Fibonacci Quart., 27 (1989), 209C216.

\bibitem{Luzon1} A. Luzon. Iterative processes related to Riordan arrays: The reciprocation and the inversion of power series. Preprint.

\bibitem{Luzon2} A. Luzon and M. A. Moron. Ultrametrics, Banach's fixed point theorem and the Riordan group. Discrete Appl. Math. 156 (2008) 2620-2635.

\bibitem{Luzon3} A. Luzon and M. A. Moron. Riordan matrices in the reciprocation of quadratic polynomials. Linear Algebra Appl. 430 (2009) 2254-2270.

\bibitem{Luzon4} A. Luzon and M. A. Moron. Recurrence relations for polynomial sequences via Riordan matrices. Preprint. See \url{http://arxiv.org/PS_cache/arxiv/pdf/0904/0904.2672v1.pdf}

\bibitem{Victor} Victor V. Prasolov Polynomials. Algorithms and Computation in Mathematics. Springer (2004).

\bibitem{Slama1} S. Slama, J. Bessrour, K. Boubaker and M.Bouhafs, Investigation of A3 point maximal front spatial evolution during resistance spot welding using 4q-Boubaker polynomial sequence, Proceedings of COTUME 2008, pp:79-80.

\bibitem{Sloan} Ian H. Sloan and R. S. Womersley, Good approximation on the sphere, with application to geodesy and the scattering of sound, Journal of Computational and App. Mathematics, 149 (2002), pp. 227-237

\bibitem{Zhao1} T. G. Zhao,B. K. Ben Mahmoud, M.A. Toumi, O. P. Faromika, M. Dada, O. B. Awojoyogbe, J. Magnuson, F. Lin, Some new properties of the applied-physics related Boubaker polynomials, Differential Equations and Control Processes,No1 (2009):7-19.

\bibitem{Zhao2} T. G. Zhao,Y. X. Wang,B. K. Ben Mahmoud, Limit and uniqueness of the Boubaker-Zhao polynomials single imaginary root sequence, International Journal of Mathematics and Computation, 1(8)(2008):13-16.

\bibitem{Bagula} Roger L. Bagula and Gary Adamson, Triangle of coefficients of Recursive Polynomials for Boubaker polynomials, OEIS (Encyclopedia of Integer Sequences), A137276 (2008).

\bibitem{Sloane} Neil J. A. Sloane, Triangle read by rows of  coefficients of Boubaker polynomial $B_n(x)$ in order of decreasing exponents, OEIS (Encyclopedia of Integer Sequences), A138034 (2008).

\bibitem{Ghanouchi1} J. Ghanouchi, H. Labiadh and  K. Boubaker, An attempt to solve the heat transfert equation in a model of pyrolysis spray using 4q-order m-Boubaker polynomials  International Journal of Heat and Technology, 26 (2008) pp. 49-53

\bibitem{Ghanouchi2} J. Ghanouchi, H. Labiadh and  K. Boubaker, An attempt to solve the heat transfert equation in a model of pyrolysis spray using 4q-order m-Boubaker polynomials  International Journal of Heat and Technology, 26 (2008) pp. 49-53

\bibitem{Slama2} S. Slama, J. Bessrour, K. Boubaker and M. Bouhafs, A dynamical model for investigation of A3 point maximal spatial evolution during resistance spot welding using Boubaker polynomials,  Eur. Phys. J. Appl. Phys. 44, (2008) pp. 317-322

\bibitem{Slama3} S. Slama, M. Bouhafs and  K. B. Ben Mahmoud,A Boubaker Polynomials Solution to Heat Equation for Monitoring A3 Point Evolution During Resistance Spot Welding,  International Journal of Heat and Technology,  26(2)  (2008) pp. 141-146.

\bibitem{Ghrib} T. Ghrib, K. Boubaker and  M. Bouhafs, Investigation of thermal diffusivity-microhardness  correlation  extended  to  surface-nitrured  steel using  Boubaker  polynomials  expansion, Modern Physics Letters B, 22, (2008) pp. 2893 - 2907

\bibitem{Boubaker2} K. Boubaker, The Boubaker polynomials, a new function class for solving bi-varied second order differential equations, F. E. Journal of App. Math. 31(2008) pp. 299 每 320.

\bibitem{Mahmoud1} B. K. Ben Mahmoud, Temperature 3D profiling in cryogenic cylindrical devices using Boubaker polynomials expansion scheme (BPES), Cryogenics,Volume 49, Issue 5(2009) pp. 217-220

\bibitem{Lazzez} S. Lazzez, K.B. Ben Mahmoud, S. Abroug, F. Saadallah, M. Amlouk, A Boubaker polynomials expansion scheme (BPES)-related protocol for measuring sprayed thin films thermal characteristics, Current Applied Physics Volume 9, Issue 5 (2009) pp.1129-1133

\bibitem{Fridjine1} S. Fridjine, K.B. Ben Mahmoud, M. Amlouk, M. Bouhafs, A study of sulfur/selenium substitution effects on physical and mechanical properties of vacuum-grown ZnS1?xSex compounds using Boubaker polynomials expansion scheme (BPES),Journal of Alloys and Compounds, Volume 479, Issues 1-2, (2009) pp. 457-461

\bibitem{Khelia} C. Kh\'{e}lia, K. Boubaker, T. Ben Nasrallah, M. Amlouk, S. Belgacem, Morphological and thermal properties of 汕-SnS2 sprayed thin films using Boubaker polynomials expansion, Journal of Alloys and Compounds, Volume 477, Issues 1-2, (2009) pp. 461-467

\bibitem{Mahmoud2} K.B. Ben Mahmoud, M. Amlouk, The $3D$ Amlouk每Boubaker expansivity每energy gap每Vickers hardness abacus: A new tool for optimizing semiconductor thin film materials,  Materials Letters, Volume 63, Issue 12 (2009) pp. 991-994

\bibitem{Dada}  M. Dada, O.B. Awojoyogbe, K. Boubaker, Heat transfer spray model: An improved theoretical thermal time-response to uniform layers deposit using Bessel and Boubaker polynomials, Current Applied Physics, Volume 9, Issue 3 (2009) pp. 622-624.

\bibitem{Rahmanov1} H. Rahmanov, Triangle read by rows: row $n$ gives coefficients of Boubaker polynomial $B_n(x)$, calculated for X=2cos(t), centered by adding -2cos(nt), then divided by 4, in order of decreasing exponents. OEIS (Encyclopedia of Integer Sequences), A160242

\bibitem{Rahmanov2} H. Rahmanov, Triangle read by rows: row $n$ gives values of the $4q-28$ Boubaker polynomials $B_{4q}(X)$ (named after Boubaker Boubaker (1897-1966)), calculated for $X=1$ (or -1). OEIS (Encyclopedia of Integer Sequences), A162180

\bibitem{Tabatabaei} S. Tabatabaei, T. Zhao, O. Awojoyogbe,  F. Moses, Cut-off cooling velocity profiling inside a keyhole model using the Boubaker polynomials expansion scheme,  Int.J. Heat Mass Transfer 45 (2009) pp. 1247-1255.

\bibitem{Fridjine2} S. Fridjine, M. Amlouk, A new parameter: An ABACUS for optimizig functional materials using the Boubaker polynomials expansion scheme, Modern Phys. Lett. B  23 (2009) pp. 2179-2182

\bibitem{Belhadj1} A. Belhadj, J. Bessrour, M. Bouhafs,  L. Barrallier,  Experimental and theoretical cooling velocity profile inside laser welded metals using keyhole approximation and Boubaker polynomials expansion, J. of Thermal Analysis and Calorimetry  97, (2009) pp. 911-920.

\bibitem{Belhadj2} A. Belhadj,  O. Onyango, N. Rozibaeva, Boubaker Polynomials Expansion Scheme-Related Heat Transfer Investigation Inside Keyhole Model , J. Thermophys. Heat Transf. 23 (2009) pp. 639-642.

\bibitem{Zhang} D. H. Zhang, F.W. Li, A Boubaker Polynomials Expansion Scheme BPES-Related Analytical  Solution to Williams-Brinkmann Stagnation Point Flow Equation at a Blunt Body,  Ir. Journal of App. Phys. Lett. IJAPLett.  2 (2009) pp. 25-30.

\end{thebibliography}
\end{document}